\newcommand{\dotex}{\frac{d}{dt}}
\newcommand{\tr}[1]{\text{Tr}\left(#1\right)}
\newcommand{\Ktr}{\mathcal{K}^{1}(\mathcal{H})}
\newcommand{\Kf}{\mathcal{K}^{\text{f}}(\mathcal{H})}
\newcommand{\ket}[1]{\ensuremath{|#1\rangle}\xspace}
\newcommand{\bra}[1]{\ensuremath{\langle #1|}\xspace}
\newcommand{\q}[1]{\ensuremath{|#1\rangle}\xspace}
\newcommand{\qd}[1]{\ensuremath{\langle #1|}\xspace}
\newcommand{\ba}{\boldsymbol{a}}
\newcommand{\bA}{\boldsymbol{A}}
\newcommand{\bI}{\boldsymbol{I}}
\newcommand{\bL}{\boldsymbol{L}}
\newcommand{\bN}{{\boldsymbol{N}}}
\newcommand{\bP}{\boldsymbol{P}}
\newcommand{\bQ}{\boldsymbol{Q}}
\newcommand{\bX}{\boldsymbol{X}}
\newtheorem{theorem}{Theorem}
\newtheorem{lemma}{Lemma}
\newtheorem{prop}{Proposition}
\title{\LARGE \bf
Convergence and adiabatic elimination for a driven dissipative quantum harmonic oscillator\footnote{This work was partially supported by the Projet Blanc ANR-2011-BS01-017-01 EMAQS.}
}
\author{
R. Azouit \footnote{ Centre Automatique et Syst\`{e}mes, Mines-ParisTech, PSL Research University.
60, bd Saint-Michel 75006 Paris.} 
\and
A. Sarlette\footnote{INRIA Paris-Rocquencourt, Domaine de Voluceau, B.P. 105,
78153 Le Chesnay Cedex, France; and Ghent University / SYSTeMS, Technologiepark 914, 9052 Zwijnaarde, Belgium.}
\and
 P. Rouchon\footnotemark[2]
}
\begin{document}

\maketitle
\thispagestyle{empty}
\pagestyle{empty}

\begin{abstract}
We prove that a harmonic oscillator driven by Lindblad dynamics where the typical drive and loss channels are two-photon processes instead of single-photon ones, converges to a protected subspace spanned by two coherent states of opposite amplitude. We then characterize the slow dynamics induced by a perturbative single-photon loss on this protected subspace, by performing adiabatic elimination in the Lindbladian dynamics.
\end{abstract}

\section{Introduction}

The harmonic oscillator is a standard quantum system. It features coherent states, equivalent of classical harmonic oscillator amplitudes, whose coherent superpositions also called ``cat states'' feature genuinely quantum properties with no classical equivalent. In a recent paper \cite{MirrahimiCatComp2014}, our collaborators take advantage of this fact to propose an implementation of logical quantum bits as ``cat states'', with potential to realize universal quantum computation using standard technological elements as quantum gates. Their key contribution, besides the insight that the cat states are inherently insensitive to part of the typical perturbations, is the design of an ``engineered reservoir'' that stabilizes a subspace of such states in open loop.

The contribution of the present paper is to precisely establish, from the Lindblad master equation~\eqref{eq:system},  the stabilization properties of this scheme both, 
\begin{itemize}
\item  in ideal situations with $\epsilon=0$,  by proving in theorem~\ref{thm:convergence} global convergence of the infinite-dimensional nominal model towards the target ``protected subspace'', 
\item in presence of a small but dominant decoherence source with $\epsilon >0$,  by establishing the approximate   slow dynamics as a reduced Lindblad master equation~\eqref{eq:slow_sys}. 
\end{itemize}
The first goal builds on a typical Lyapunov-LaSalle strategy, with additional care needed due to the infinite dimension. For the second goal, we resort to a separation of the quantum dynamics into fast and slow components. We then apply an adiabatic elimination of the fast system to deduce a good approximation of the dynamics on a slow manifold, with the state remaining $\epsilon$-close to the protected subspace. Studying such perturbations is standard for quantum \emph{Hamiltonian} systems, where regular perturbation theory can be routinely applied \cite{sakurai2011modern}, but the Lindbladian case with singular perturbations has attracted much less attention. In \cite{mirrahimi-rouchonieee09} and similarly \cite{ReiteS2012PRA} singular perturbations up to second order are applied to a system with $N$ ground states and eliminating relaxing excited states. In \cite{atkins2003approximate,warszawski2000adiabatic} specific atom optics dynamics and an ancilla-mediated feedback are investigated with the standard approach of \cite{CarmichaelBook1993}. In \cite{Kessl2012PRA} the so-called Schrieffer-Wolff formalism is generalized to Lindbladian dynamics; its basic form requires inversion of the nominal dynamics operator, which is not too practical and which we circumvent here for the derivation of the reduced slow master equation~\eqref{eq:slow_sys}. 

The paper is organized as follows. Section II describes the mathematical model of the dynamics to be studied. Section III provides the global convergence proof for an idealized model, and Section IV analyzes precisely how this allows to counter the dominant external perturbation, which is single-photon loss. We pedagogically present the corresponding slow/fast perturbative argument (Section IV.B) as the translation to quantum notation of the standard dynamical systems approach, recalled in Section IV.A. Finally, simulations illustrate the validity of our analysis in Section V.

\section{Driven dissipative pairwise photon process}

The underlying space of the quantum harmonic oscillator is a Hilbert space $\mathcal{H}$ of infinite dimension spanned by the Fock states $\{ \ket{n} \}_{n \in\mathbb{N}}$. The annihilation operator $\ba$ is defined by $\ba\ket{n}=\sqrt{n}\ket{n-1}$ for any $n\geq 1$ and $\ba\ket{0}=0$. Its Hermitian conjugate $\ba^\dagger$, verifies $\ba^\dagger\ket{n}=\sqrt{n+1}\ket{n+1}$, for any $n\geq 0$; . We denote $\bN=\ba^\dagger \ba$ the photon number operator, satisfying $\bN\ket{n}=n \ket{n}$, for any $n\geq 0$. For any $\alpha \in \mathbb{C}$, a coherent state $\ket{\alpha}\in\mathcal{H}$ is characterized by
$\ba\ket{\alpha}=\alpha \ket{\alpha}$ and
\begin{equation*}
\ket{\alpha}=e^{-\frac{|\alpha|^2}{2}} \sum_{n=0}^\infty \frac{\alpha^n}{\sqrt{n!}}\ket{n}
.
\end{equation*}
The coherent states $\ket{\alpha}$ are viewed as the quantum model for a ``classical state'' of complex amplitude $\alpha$. Indeed, whereas the classical undamped harmonic oscillator $\tfrac{d}{dt} x = -\omega p$, $\tfrac{d}{dt} p = \omega x$ with $x,p \in \mathbb{R}$ has solutions $x(t) = \alpha e^{i \omega t}$, the quantum harmonic oscillator
$\tfrac{d}{dt}\q{\psi} = -i \omega (\bN+\bI/2) \q{\psi}$
features solutions $\q{\psi(t)} = \q{\alpha(t)} = \q{\alpha e^{i \omega t}}$ up to an irrelevant phase. As is customary, in the rest of the paper we describe the system in a frame rotating at the oscillator frequency $\omega$, for which the oscillator has stationary solutions $\q{\alpha}$.

We call Schr\"{o}dinger cat state or simply cat state the coherent superposition of two coherent states with opposite amplitudes,
\begin{equation} \label{eq:cat}
\ket{c_\alpha^{\pm}}=\frac{\ket{\alpha}\pm \ket{-\alpha}}{\mathcal{\gamma^\pm}}
\end{equation}
where $\gamma^\pm= \sqrt{2(1 \pm e^{-2|\alpha|^{2}})}$ is a normalization factor. If $|\alpha| \gg 1$ then we have $\gamma^+ \approx \gamma^- \approx \sqrt{2}$.

Throughout this paper, we denote by $\Ktr$ the set of  trace-class operators on $\mathcal{H}$, i.e. compact Hermitian operators on $\mathcal{H}$ whose eigenvalues $(\sigma_k)_{k\in\mathbb{N}}$ satisfy $\sum_{k\geq ^0} |\sigma_k| < +\infty$. This $\Ktr$ equipped with the trace-norm $\tr{|\cdot|}$ is a Banach space (see e.g.~\cite{TarasovBook2008}). The set of density operators (quantum states) corresponds to  elements of $\Ktr$ that are non-negative and  of trace $1$. They are usually  denoted by $\rho$. We denote also by $\Kf$, the subspace of $\Ktr$ of operators whose range is included in a vector space spanned by a finite number of Fock states:
$$
\Kf=\left\{ \sum_{\text{finite}} f_{n,n'} \ket{n}\bra{n'} : f_{n,n'}\in\mathbb{C}, f_{n,n'}=f^*_{n',n}\right\}
.
$$

We consider the  quantum  harmonic oscillator interacting with its environment as described in~\cite{MirrahimiCatComp2014}. An external coherent driving field of amplitude $u$ (assumed here real and strictly positive without loss of generality) is applied such that the oscillator can only exchange photons in pairs. Furthermore, the quantum system is built such that similarly the main dissipative process is a \emph{pairwise} photon loss with rate $\kappa >0$. Nevertheless, due to physical constraints, even if a single photon loss process is much less frequent than the previous one, it can't be totally omitted. For this reason, this term appears with a small coefficient $0< \epsilon  \ll \kappa $ in the following Lindblad master equation which governs the system dynamics:
\begin{equation*}
\dotex{\rho}=u[(\ba^\dagger)^2 - \ba^2,\rho] + \kappa \mathfrak{L}_{\ba^2}(\rho) + \epsilon \mathfrak{L}_{\ba}(\rho)
\end{equation*}
where $[\cdot,\cdot]$ stands for the commutator and where, for any  linear operator $\bA$  on $\mathcal{H}$, the super-operator  $\mathfrak{L}_{\bA}$  is given by
$$ \mathfrak{L}_{\bA}(\rho)= \bA \rho \bA^\dag- (\bA^\dag\bA\rho + \rho \bA^\dag\bA)/2 \, .$$
Noting $\alpha=2u/\kappa$ and $\bL=\ba^2-\alpha^2$ one can  reformulate the previous equation as :
\begin{equation} \label{eq:system}
\dotex{\rho}=\kappa\mathfrak{L}_{\bL}(\rho) + \epsilon \mathfrak{L}_{\ba}(\rho)
\end{equation}
It is shown in~\cite{MirrahimiCatComp2014} that, for $\epsilon=0$, the two-dimensional Hilbert space  $$
\mathcal{H}_\alpha=\text{span}\big\{\ket{\alpha},\ket{-\alpha}\big\}$$
   is a decoherence-free space:  for $\epsilon=0$, any density operator $\bar\rho$  with support included in $\mathcal{H}_\alpha $  is a steady state, i.e., $\mathfrak{L}_{\bL}(\bar\rho)=0$.

We will not investigate here  the well-posedness of~\eqref{eq:system} and the associated   strongly continuous semigroup of linear contraction on  $\Ktr$. Such issues can be investigated via theorem 3.1 of~\cite{Davie1977RoMP} ensuring the existence  of minimal solutions since  for any $\alpha,\kappa,\epsilon >0$, the operator
$-\kappa \bL^\dag \bL -\epsilon \ba^\dag \ba$ is the infinitesimal generator of a strongly continuous one parameter  contraction  semigroup  on $\mathcal{H}$. This means that, in the sequel, we will always assume that the  Cauchy problem~\eqref{eq:system} with an initial condition $\rho_0\in\Ktr$, positive semidefinite and of trace one, admits a solution in $\Ktr$ defined for any $t >0$. Moreover, this minimal solution will remain positive semidefinite.  We will investigate here the time asymptotic regime  for $\epsilon =0$ and then for $0 <\epsilon\ll\kappa $.

\section{Convergence of~\eqref{eq:system} for  $\epsilon=0$} \label{sec:conv}

\begin{lemma}\label{lem:Lyap}
 For any quantum state $\rho$ in $\Kf$ we have
 $
 \tr{\bL \mathfrak{L}_{\bL}(\rho) \bL^\dag} \leq -2 \tr{\bL \rho \bL^\dag}
 .
 $
\end{lemma}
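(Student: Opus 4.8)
The plan is to expand the super-operator, collapse everything to a single trace against $\rho$ by cyclicity of the trace, and then reduce the claim to a manifestly non-negative quantity built from the photon number operator $\bN$.

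First I would substitute $\mathfrak{L}_{\bL}(\rho) = \bL\rho\bL^\dagger - \tfrac12\bL^\dagger\bL\rho - \tfrac12\rho\bL^\dagger\bL$ into $\tr{\bL\,\mathfrak{L}_{\bL}(\rho)\,\bL^\dagger}$. Since $\rho\in\Kf$ has finite rank with range in the span of finitely many Fock states, every operator product occurring below is again finite rank, so all traces are finite and the identity $\tr{AB}=\tr{BA}$ applies without any domain subtlety — this is exactly why the statement is restricted to $\Kf$ rather than to all of $\Ktr$. Cycling the outer factors to the front in each of the three terms then gives
\begin{equation*}
\tr{\bL\,\mathfrak{L}_{\bL}(\rho)\,\bL^\dagger} = \tr{\big((\bL^\dagger)^2\bL^2 - (\bL^\dagger\bL)^2\big)\rho} = -\tr{\bL^\dagger\,[\bL,\bL^\dagger]\,\bL\,\rho},
\end{equation*}
the last equality being the algebraic identity $(\bL^\dagger)^2\bL^2 - (\bL^\dagger\bL)^2 = \bL^\dagger(\bL^\dagger\bL - \bL\bL^\dagger)\bL$.

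The key computation is the commutator. Because $\alpha$ is real we have $[\bL,\bL^\dagger] = [\ba^2,(\ba^\dagger)^2]$, and a short calculation from $[\ba,\ba^\dagger]=\bI$ (or a direct check on the Fock basis) yields $[\bL,\bL^\dagger] = 4\bN + 2\bI$. Inserting this and using also $\tr{\bL\rho\bL^\dagger} = \tr{\bL^\dagger\bL\rho}$, the asserted inequality becomes
\begin{equation*}
\tr{\bL\,\mathfrak{L}_{\bL}(\rho)\,\bL^\dagger} + 2\,\tr{\bL\rho\bL^\dagger} = \tr{\big(-\bL^\dagger(4\bN+2\bI)\bL + 2\bL^\dagger\bL\big)\rho} = -4\,\tr{\bL^\dagger\,\bN\,\bL\,\rho},
\end{equation*}
where the two $\bL^\dagger\bL$ contributions cancel exactly.

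I would then conclude by positivity: diagonalizing $\rho = \sum_j p_j\ket{\psi_j}\bra{\psi_j}$ with $p_j\geq 0$ (a finite sum, since $\rho\in\Kf$), one has $\tr{\bL^\dagger\bN\bL\rho} = \sum_j p_j\,\big\|\sqrt{\bN}\,\bL\ket{\psi_j}\big\|^2 \geq 0$ because $\bN\geq 0$. Hence the right-hand side above is $\leq 0$, which is precisely the claim. I do not expect a real obstacle here; the only points that need a little care are the bookkeeping in the cyclic rearrangements — legitimate thanks to the $\Kf$ restriction — and getting the constant term in $[\ba^2,(\ba^\dagger)^2]$ right, since it is exactly this ``$2\bI$'' that produces the factor $2$ appearing in the Lyapunov estimate.
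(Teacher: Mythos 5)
Your proof is correct and follows essentially the same route as the paper: reduce $\tr{\bL\,\mathfrak{L}_{\bL}(\rho)\,\bL^\dagger}$ by cyclicity to a trace against the commutator $[\bL^\dagger,\bL]=[(\ba^\dagger)^2,\ba^2]=-4\bN-2\bI$, and conclude from positivity of $\bN$ and of $\bL\rho\bL^\dagger$. Your version merely spells out the intermediate algebra and the positivity step in more detail; no gap.
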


\begin{proof}
From $
   \tr{\bL \mathfrak{L}_{\bL}(\rho) \bL^\dag}
   = \tr{\bL \rho \bL^\dag [\bL^\dag,\bL]}
$ and   $[\bL^\dag,\bL]=[(\ba^\dag)^2,\ba^2]= -4\bN-2\bI$, we have
  $$
 \tr{\bL \mathfrak{L}_{\bL}(\rho) \bL^\dag}
   = -2\tr{\bL \rho \bL^\dag(2\bN+\bI)}
.
$$
We conclude since   $\bN$ and $\bL\rho\bL^\dag$ are positive semidefinite.
\end{proof}

Modulo arguments proper to the infinite-dimensional setting, this basically means that $V(\rho) = \tr{\bL \rho \bL^\dag}$ is an exponential Lyapunov function for the system \eqref{eq:system} with $\epsilon=0$.

\begin{lemma}\label{lem:Nnu}
 For any $\nu\geq 1$ there exists $\mu\geq 0$ such that,  for any  quantum state $\rho$ in $\Kf$ we have
 $$\tr{\mathfrak{L}_{\bL}(\rho) \bN^\nu}\leq  - \nu  \left(\tr{\rho\bN^{\nu}}\right)^{\frac{\nu+1}{\nu}} + \mu
 $$
\end{lemma}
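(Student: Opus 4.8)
The plan is to compute $\tr{\mathfrak{L}_{\bL}(\rho)\bN^\nu}$ explicitly in terms of traces of $\rho$ against powers of $\bN$, extract the dominant negative term, and bound the remaining lower-order terms by a constant $\mu$ using a Young-type inequality. First I would use the cyclicity of the trace to write
$$
\tr{\mathfrak{L}_{\bL}(\rho)\bN^\nu} = \tr{\rho\left(\bL^\dag \bN^\nu \bL - \tfrac12\{\bL^\dag\bL,\bN^\nu\}\right)} = \tr{\rho \, \bL^\dag\,[\bN^\nu,\bL]_{\text{sym}}}\,,
$$
more precisely $\tr{\mathfrak{L}_{\bL}(\rho)\bN^\nu} = \tfrac12\tr{\rho\left(\bL^\dag[\bN^\nu,\bL] + [\bL^\dag,\bN^\nu]\bL\right)}$. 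Since $\bL = \ba^2 - \alpha^2$ and $\bN$ commutes with the constant, $[\bN^\nu,\bL]=[\bN^\nu,\ba^2]$. Using $\ba\,\bN = (\bN+\bI)\ba$ one gets $\ba^2 f(\bN) = f(\bN+2\bI)\ba^2$ for any function $f$, so $[\bN^\nu,\ba^2] = \left(\bN^\nu - (\bN+2\bI)^\nu\right)\ba^2$ — wait, let me be careful with sign: $\bN^\nu \ba^2 = \ba^2 (\bN+2\bI)^\nu$, hence $[\bN^\nu,\ba^2] = \ba^2\left((\bN+2\bI)^\nu - \bN^\nu\right)$.

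The key structural observation is then that everything should reduce to a diagonal computation: since $\bN^\nu$ is diagonal in the Fock basis and the symmetrized combination of commutators produces an operator diagonal in $\bN$ plus off-diagonal pieces of the form $(\ba^\dag)^2 g(\bN)$ and $g(\bN)\ba^2$, only the diagonal part survives when we also expand $\bL^\dag\bL = (\ba^\dag)^2\ba^2 - \alpha^2((\ba^\dag)^2+\ba^2) + \alpha^4$. Collecting terms, $\tr{\mathfrak{L}_{\bL}(\rho)\bN^\nu}$ equals $\sum_n \rho_{nn}\, p_\nu(n)$ where $p_\nu(n)$ is a polynomial in $n$ whose leading behaviour I expect to be $-\nu\, n^{\nu+1}\cdot(\text{positive constant})$ coming from the $(\ba^\dag)^2\ba^2$ part: indeed $(\ba^\dag)^2\ba^2\ket{n} = n(n-1)\ket{n}$ and the bracket $(n+2)^\nu - n^\nu \sim 2\nu n^{\nu-1}$, giving a term $\sim -\tfrac12\cdot 2\nu n^{\nu-1}\cdot n(n-1)\sim -\nu n^{\nu+1}$ (I will need to check the constant carefully; possibly a rescaling of $\nu$ or absorbing a constant suffices, but the statement as written suggests the clean constant $-\nu$ works). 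The $\alpha^2$ and $\alpha^4$ contributions, and the subleading terms in $n$, contribute a polynomial in $n$ of degree at most $\nu$.

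It then remains to handle the bound $\sum_n \rho_{nn} p_\nu(n) \le -\nu\left(\sum_n \rho_{nn} n^\nu\right)^{(\nu+1)/\nu} + \mu$. The left side is $-\nu\tr{\rho\bN^{\nu+1}} + \tr{\rho\, q_\nu(\bN)}$ with $\deg q_\nu \le \nu$. By Jensen's inequality applied to the convex function $x\mapsto x^{(\nu+1)/\nu}$ and the probability distribution $(\rho_{nn})$, we have $\tr{\rho\bN^{\nu+1}} = \sum_n \rho_{nn} (n^\nu)^{(\nu+1)/\nu}\ge \left(\sum_n \rho_{nn} n^\nu\right)^{(\nu+1)/\nu} = \left(\tr{\rho\bN^\nu}\right)^{(\nu+1)/\nu}$, which converts the genuine $\bN^{\nu+1}$ term into the desired power of $\tr{\rho\bN^\nu}$. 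Finally, for the leftover polynomial $q_\nu(\bN)$ of degree $\le\nu$, I would use the elementary fact that for each monomial $n^j$ with $j\le\nu$ one has $c\, n^j \le \tfrac{\nu}{2}\, n^{\nu+1} + C_j$ for a constant $C_j$ depending on $c,j,\nu$ (Young's inequality, since $j/(\nu+1)<1$), so that $\tr{\rho\, q_\nu(\bN)} \le \tfrac{\nu}{2}\tr{\rho\bN^{\nu+1}} + \mu_0 \le \tfrac{\nu}{2}\left(\tr{\rho\bN^\nu}\right)^{(\nu+1)/\nu}+\mu_0$; combining and relabelling constants (and, if the arithmetic does not land exactly on $\nu$, adjusting the argument so the genuinely negative coefficient dominates) yields the claim with $\mu$ independent of $\rho$. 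The main obstacle I anticipate is purely computational: tracking the exact polynomial $p_\nu(n)$ through the noncommutative algebra to confirm that the leading coefficient is exactly $-\nu$ (rather than some other positive multiple), and making sure the restriction $\rho\in\Kf$ is genuinely used only to keep all these traces finite so that the manipulations are justified term by term.
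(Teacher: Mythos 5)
Your overall strategy (compute $\mathfrak{L}^*_{\bL}(\bN^\nu)$, isolate a dominant negative $\bN^{\nu+1}$ term, absorb lower-order terms into a constant, finish with Jensen) is the same as the paper's, and your final concavity step is exactly the paper's. But there is a genuine gap at your ``key structural observation'': it is \emph{not} true that only the diagonal part survives. Expanding $\mathfrak{L}^*_{\bL}(\bN^\nu)=\bL^\dag\bN^\nu\bL-\tfrac{1}{2}\{\bL^\dag\bL,\bN^\nu\}$ with $\bL=\ba^2-\alpha^2$, the cross terms in $\alpha^2$ do not cancel: they combine into $\tfrac{\alpha^2}{2}[\ba^2,\bN^\nu]$ and $\tfrac{\alpha^2}{2}[\bN^\nu,(\ba^\dag)^2]$, i.e.\ into $\tfrac{\alpha^2}{2}\ba^2 g(\bN)$ and $\tfrac{\alpha^2}{2}g(\bN)(\ba^\dag)^2$ with $g(\bN)=\bN^\nu-(\bN-2\bI)^\nu$. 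These are genuinely off-diagonal: $\tr{\rho\,\ba^2 g(\bN)}=\sum_n g(n)\sqrt{n(n-1)}\,\rho_{n,n-2}$ involves the matrix elements of $\rho$ two off the main diagonal, so $\tr{\mathfrak{L}_{\bL}(\rho)\bN^\nu}$ is \emph{not} of the form $\sum_n\rho_{nn}\,p_\nu(n)$ and the reduction to a scalar polynomial inequality collapses. (Incidentally, your commutation identity has a shift error --- $\bN^\nu\ba^2=\ba^2(\bN-2\bI)^\nu$, not $\ba^2(\bN+2\bI)^\nu$ --- and the diagonal part actually contributes $n(n-1)\big((n-2)^\nu-n^\nu\big)\sim-2\nu n^{\nu+1}$, so the true leading coefficient is $-2\nu$, not $-\nu$; the factor $2$ of slack is needed below.)

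The missing ingredient is a bound on these off-diagonal contributions, and this is where the paper does real work: writing $\ba^2 g(\bN)=\ba\sqrt{g(\bN-\bI)}\,\ba\sqrt{g(\bN)}$ and applying Cauchy--Schwarz gives $\big|\tr{\rho\ba^2 g(\bN)}\big|\le\sqrt{\tr{\rho g(\bN-2\bI)(\bN+\bI)}\tr{\rho g(\bN)\bN}}\le\tr{\rho g(\bN)(\bN+3\bI)}$, which is of order $\tr{\rho\bN^{\nu}}$ and can therefore be absorbed, together with the subleading diagonal terms, into the gap between $-2\nu$ and the claimed $-\nu$. Note that this contribution is of size $\alpha^2\tr{\rho\bN^\nu}$, not $O(1)$, so it cannot simply be pushed into $\mu$; it must be dominated by the negative leading term, which is exactly what the paper's pointwise inequality $\big((x+2)^\nu-x^\nu\big)\big(-x^2+(\alpha^2+1)x+3\alpha^2\big)\le-\nu x^{\nu+1}+\mu$ accomplishes. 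One further small slip: your chain $\tfrac{\nu}{2}\tr{\rho\bN^{\nu+1}}\le\tfrac{\nu}{2}\left(\tr{\rho\bN^\nu}\right)^{(\nu+1)/\nu}$ runs in the wrong direction (Jensen gives $\ge$); you should first subtract the absorbed positive multiple of $\tr{\rho\bN^{\nu+1}}$ from the negative leading term and only then apply Jensen to the remaining negative term.
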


\begin{proof}
Denote by $\mathfrak{L}^*_{\bL}$ the adjoint of $\mathfrak{L}_{\bL}$:
$
\mathfrak{L}^*_{\bL}(\bA)= \bL^\dag \bA \bL- (\bL^\dag\bL\bA+ \bA \bL^\dag\bL)/2
$ for any Hermitian operator $\bA$ on $\mathcal{H}$.
Computations relying on the identity $\ba f(\bN) = f(\bN+\bI)\ba$ for any function $f$, yield
\begin{eqnarray*}
  \mathfrak{L}^*_{\bL}(f(\bN)) & = & - \bN(\bN-1) (f(\bN)-f(\bN-2\bI))
  \\
  && +\tfrac{\alpha^2}{2} \ba^2(f(\bN)-f(\bN-2\bI))  
  +\tfrac{\alpha^2}{2} (f(\bN)-f(\bN-2\bI))(\ba^\dag)^2 \; .
\end{eqnarray*}
Since $
\tr{\mathfrak{L}_{\bL}(\rho) f(\bN)} = \tr{\rho \mathfrak{L}^*_{\bL}(f(\bN)) }$, we have
\begin{eqnarray*}
  \tr{\mathfrak{L}_{\bL}(\rho) f(\bN)}
  & = &  
  - \tr{\rho \bN(\bN-1) (f(\bN)-f(\bN-2\bI))}
  \\
  && +\tfrac{\alpha^2}{2} \tr{\rho\ba^2(f(\bN)-f(\bN-2\bI))}
  \\
  && +\tfrac{\alpha^2}{2}\tr{\rho (f(\bN)-f(\bN-2\bI))(\ba^\dag)^2}
  .
\end{eqnarray*}
Take $f(x)=x^\nu$ and define $g(x)=f(x)-f(x-2)$ for $x\geq 2$; $g(x)=f(x)$ for $2>x \geq 0$; $g(x)=f(0)
$ for $x<0$.
From $\ba^2 g(\bN)= \ba\sqrt{g(\bN-\bI)} \ba \sqrt{g(\bN)} $ and Cauchy Schwartz inequality
\begin{eqnarray*}
 |\tr{\rho\ba^2 g(\bN)}|
  & = & \left|\tr{\big(\sqrt{\rho} \ba \sqrt{g(\bN-\bI)} \big)\big( \ba\sqrt{g(\bN)} \sqrt{\rho}\big)}\right|
  \\
  &\leq & \sqrt{\tr{\rho g(\bN-2\bI)(\bN+\bI)}\tr{\rho g(\bN) \bN}} \, .
\end{eqnarray*}
Since  $g(x-2)(x+1) \leq g(x)(x+3)$ for $x\geq 0$,  we have
$$
\big| \tr{\rho\ba^2 g(\bN)}\big| \leq \tr{\rho g(\bN) (\bN+3\bI)}
 .
$$
Thus
$$
\tfrac{1}{2}\tr{\rho (\ba^2g(\bN)+g(\bN)(\ba^\dag)^2)}\leq \tr{\rho g(\bN) (\bN+3\bI)}
$$
and
$$
  \tr{\mathfrak{L}_{\bL}(\rho) \bN^\nu}  
  \leq  \tr{\rho g(\bN)\left(-\bN^2+ (\alpha^2+1)\bN  + 3\alpha^2 \right)}
  .
$$
Since $(x^\nu -(x-2)^\nu)\left(-x^2+ (\alpha^2+1)x + 3\alpha^2 \right)$ is equivalent to $-2\nu x^{\nu+1}$ for large $x$, there exists $\mu >0$ such that for all $x \geq 0$,
$$
((x+2)^\nu -x^\nu)\left(-x^2+ (\alpha^2+1)x + 3\alpha^2 \right)
\leq - \nu x^{\nu+1} + \mu
.
$$
Finally we get
$$
 \tr{\mathfrak{L}_{\bL}(\rho) \bN^\nu}
  \leq  - \nu  \tr{\rho\bN^{\nu+1}} + \mu
$$
Since $ x^{\frac{\nu}{\nu+1}}$ is concave,
$
\left(\tr{\rho\bN^{\nu+1}} \right)^{\frac{\nu}{\nu+1}} \geq \tr{\rho \bN^\nu}
.
$
 \end{proof}\vspace{3mm}

\begin{theorem} \label{thm:convergence}
Consider a trajectory  $[0,+\infty[\ni t\mapsto \rho(t)\in\Ktr$ of  the  master equation~\eqref{eq:system} with $\epsilon=0$, $\kappa >0$ and $\alpha >0$. The following statements hold true:
\begin{enumerate}
  \item  \label{thm:convergence1 } Take $\nu\geq 1$. Then there exists $\gamma >0$ such that,  for any  initial quantum state $\rho(0)=\rho_0$ satisfying $\tr{\rho_0\bN^\nu}< +\infty$,  we have  for all $t>0$, $\tr{\rho(t) \bN^{\nu}} \leq \max\big(\gamma, \tr{\rho_0\bN^\nu}\big)$.
\item \label{thm:convergence2} Assume that $\rho_0\in\Kf$. Then, there exists a quantum state  $\bar\rho$  with  support in $\mathcal{H}_\alpha$ such that, for any $\nu \geq 0$,  $\lim_{t\mapsto +\infty}\tr{\left| \bN^{\frac{\nu}{2}} (\rho(t)-\bar\rho )\bN^{\frac{\nu}{2}} \right|}=0$.
\end{enumerate}
\end{theorem}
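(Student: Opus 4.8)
Our plan is a Lyapunov--LaSalle argument. Lemma~\ref{lem:Nnu} yields the first statement and, through it, precompactness of the orbit; Lemma~\ref{lem:Lyap} supplies an exponentially decaying Lyapunov functional that drives $\rho(t)$ toward the set of states supported in $\mathcal{H}_\alpha=\ker\bL$; and the passage from convergence to that \emph{set} to convergence to a \emph{single} state is obtained from an integrable bound on the drift of the compression $\Pi\rho(t)\Pi$, where $\Pi$ denotes the orthogonal projector onto $\mathcal{H}_\alpha$. For the first statement, put $y(t)=\tr{\rho(t)\bN^\nu}$; reading Lemma~\ref{lem:Nnu} along the solution gives $\dotex y\leq\kappa\big(\mu-\nu\,y^{(\nu+1)/\nu}\big)$, and since the scalar equation $\dotex z=\kappa\big(\mu-\nu z^{(\nu+1)/\nu}\big)$ has $\gamma:=(\mu/\nu)^{\nu/(\nu+1)}$ as an equilibrium attractive from above with $z(t)\leq\max(z(0),\gamma)$ along every solution, a comparison argument yields $y(t)\leq\max\!\big(\tr{\rho_0\bN^\nu},\gamma\big)$ for all $t>0$. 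The one genuine difficulty here is infinite-dimensional: Lemma~\ref{lem:Nnu} is stated on $\Kf$ while the minimal solution instantly leaves $\Kf$, so one must first justify that $t\mapsto y(t)$ is absolutely continuous and obeys the differential inequality almost everywhere (e.g.\ via a Fock-space truncation of the generator and a monotone passage to the limit exploiting complete positivity). I expect this technical bookkeeping to be the main obstacle of the whole theorem; the rest is comparatively soft.

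For the second statement, fix $\rho_0\in\Kf$, so every $\tr{\rho_0\bN^\nu}$ and $V(\rho_0):=\tr{\bL\rho_0\bL^\dag}$ are finite. Lemma~\ref{lem:Lyap} gives $\dotex V\leq-2\kappa V$, hence $V(\rho(t))\leq V(\rho_0)e^{-2\kappa t}$, which is integrable on $[0,\infty)$. By the first statement, $M:=\sup_{t\geq0}\tr{\rho(t)\bN}<\infty$, and since $\{\rho\geq0:\tr{\rho}=1,\ \tr{\rho\bN}\leq M\}$ is compact in trace norm, the orbit is precompact. Now $\mathcal{H}_\alpha=\ker\bL$ gives $\bL\Pi=0$, hence $\bL^\dag\bL\,\Pi=\Pi\,\bL^\dag\bL=0$, so the anticommutator part of $\mathfrak{L}_{\bL}$ drops out of the compression and
\[
\dotex\big(\Pi\rho(t)\Pi\big)=\kappa\,\Pi\,\mathfrak{L}_{\bL}(\rho(t))\,\Pi=\kappa\,\Pi\bL\,\rho(t)\,\bL^\dag\Pi,
\]
an operator whose trace norm is at most $\tr{\bL\rho(t)\bL^\dag}=V(\rho(t))$. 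Integrability of $V$ then forces $\Pi\rho(t)\Pi$ to converge in trace norm to some $\bar\rho\geq0$ supported in $\mathcal{H}_\alpha$; note $\tr{\Pi\rho(t)\Pi}=1-\tr{(\bI-\Pi)\rho(t)}$.

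It remains to show $\tr{\big|\rho(t)-\Pi\rho(t)\Pi\big|}\to0$. Writing $\rho(t)$ in blocks along $\mathcal{H}_\alpha\oplus\mathcal{H}_\alpha^\perp$, positivity together with the fact that the off-diagonal block has rank at most $2$ gives $\tr{\big|\rho-\Pi\rho\Pi\big|}\leq C_0\sqrt{\tr{(\bI-\Pi)\rho}}$ for a universal constant $C_0$, so it suffices that $\tr{(\bI-\Pi)\rho(t)}\to0$. Since $V(\rho(t))\to0$ and, on the compact set $\{\rho\geq0:\tr{\rho}=1,\ \tr{\rho\bN}\leq M\}$, the trace-norm continuous functional $\rho\mapsto\tr{(\bI-\Pi)\rho}$ vanishes exactly where the trace-norm lower semicontinuous functional $\rho\mapsto\tr{\bL\rho\bL^\dag}$ vanishes (both characterize support in $\ker\bL$), a routine compactness argument provides a modulus $\eta(\delta)\to0$ as $\delta\to0$ with $V(\rho)<\eta(\delta)\Rightarrow\tr{(\bI-\Pi)\rho}<\delta$; hence $\tr{(\bI-\Pi)\rho(t)}\to0$. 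Combining the last two steps, $\rho(t)\to\bar\rho$ in trace norm, with $\bar\rho$ a state supported in $\mathcal{H}_\alpha$ (in particular $\tr{\bar\rho\bN^\nu}<+\infty$ for all $\nu$, $\mathcal{H}_\alpha$ being spanned by coherent states). Finally, for $\nu>0$ the weighted convergence follows from this trace-norm convergence and the uniform bound $\sup_t\tr{\rho(t)\bN^{2\nu}}<\infty$ by the standard truncation estimate: with $P_m$ the projector onto the first $m$ Fock states, $\tr{\big|\bN^{\nu/2}(\rho-\bar\rho)\bN^{\nu/2}\big|}$ splits into a low block bounded by $m^{\nu}\,\tr{|\rho-\bar\rho|}$, which tends to $0$ at fixed $m$, and a tail of size $O(m^{-1/2})$ uniformly in $t$; letting $m\to\infty$ after $t\to\infty$ concludes.
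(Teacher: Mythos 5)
Your proposal is correct and, for the second statement, takes a genuinely different route from the paper. Part 1 is handled the same way in both (comparison with the scalar ODE coming from Lemma~\ref{lem:Nnu}), and you are right that extending the differential inequality from $\Kf$ to the actual trajectory is the main unstated technicality --- the paper dispatches it with a one-line density remark. For part 2, the paper extracts an adherent point $\bar\rho$ of the orbit, precompactness being obtained from the compact embedding $\mathcal{K}^{1}_{\nu'}(\mathcal{H})\hookrightarrow\mathcal{K}^{1}_{\nu}(\mathcal{H})$ (which simultaneously delivers the weighted $\bN^{\nu/2}$ convergence, so no separate truncation estimate is needed), shows $\tr{\bL\bar\rho\bL^\dag}=0$ via Lemma~\ref{lem:Lyap}, and then obtains uniqueness of the limit from the contraction property of the Lindblad semigroup in trace distance: $t\mapsto\tr{|\rho(t)-\bar\rho|}$ is non-increasing because $\bar\rho$ is a steady state. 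You instead prove convergence constructively: since $\bL\Pi=0$ the compression obeys $\tfrac{d}{dt}(\Pi\rho\Pi)=\kappa\,\Pi\bL\rho\bL^\dag\Pi$, whose trace norm is dominated by the integrable, exponentially decaying Lyapunov function, so $\Pi\rho(t)\Pi$ converges by a Cauchy argument; the off-block remainder is then controlled by your $C_0\sqrt{\tr{(\bI-\Pi)\rho}}$ estimate together with the compactness/modulus argument relating the zero sets of $\tr{(\bI-\Pi)\rho}$ and $\tr{\bL\rho\bL^\dag}$. Your route avoids invoking the semigroup contraction property (not entirely innocent for minimal solutions in infinite dimension) and gives quantitative information on how fast the compression settles, at the price of a longer chain of estimates; the paper's route is shorter but leans on that contraction property and on the compact-embedding machinery. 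Both arguments share the same residual gap of applying Lemmas~\ref{lem:Lyap} and~\ref{lem:Nnu}, stated only on $\Kf$, along trajectories that leave $\Kf$ instantly, so your proposal is at the same level of rigor as the paper.
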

 The limit $\bar\rho$ depends on  $\rho_0$. It can be derived from $\rho_0$ with the four Hermitian, bounded and independent  operators that are in the kernel of the adjoint super-operator $\mathfrak{L}^*_{\bL}$ and given in~\cite{MirrahimiCatComp2014}.

\begin{proof}
The first statement is a direct consequence of Lemma~\ref{lem:Nnu} and of the fact that quantum states element of   $\Kf$ are a dense subset of the quantum states $\sigma$ of $\Ktr$ with $\tr{\sigma\bN^{\nu}}$ finite.  $\tr{\rho(t) \bN^\nu}$   remains bounded since
$$
\dotex \tr{\rho \bN^\nu} = \kappa\tr{\mathfrak{L}_{\bL}(\rho) \bN^\nu} \leq - v \kappa \tr{\rho \bN^{\nu}}^{\frac{\nu+1}{\nu}} + \kappa \mu
.
$$
Thus for $\tr{\rho\bN^\nu} \geq \lambda= \left(\tfrac{\mu}{\nu}\right)^{\frac{\nu}{\nu+1}}$, $\dotex \tr{\rho\bN^\nu} \leq 0$.

The second statement exploits the first one. We can assume $\nu \geq 2$. Take $\nu' >\nu$.
Denote by $\mathcal{K}^{1}_{\nu'}(\mathcal{H})$  the supspace of trace-class operators $\sigma$ such that $\tr{\left| \bN^{\frac{\nu'}{2}}\sigma \bN^{\frac{\nu'}{2}} \right|}$ is finite. The space  $\mathcal{K}^{1}_{\nu'}(\mathcal{H})$ with the norm
$\| \sigma \|_{\nu'} = \tr{|\sigma|} + \tr{\left| \bN^{\frac{\nu'}{2}}\sigma \bN^{\frac{\nu'}{2}} \right|}$ is a Banach space.
From the first statement,  we know that $\rho_0$ being an element of $\mathcal{K}^{1}_{\nu'}(\mathcal{H})$, $\rho(t)$ remains always in $\mathcal{K}^{1}_{\nu'}(\mathcal{H})$. Since  $\nu'>\nu$,  the injection of  $\mathcal{K}^{1}_{\nu'}(\mathcal{H})$ into $\mathcal{K}^{1}_{\nu}(\mathcal{H})$ is compact: $\{\rho(t)~|~t\geq 0\}$ is precompact in $\mathcal{K}^{1}_{\nu}(\mathcal{H})$.
Denote by $\bar\rho\in \mathcal{K}^{1}_{\nu}(\mathcal{H})$ an adherent point of $\rho(t)$ for $t$ tending towards infinity. Since $\nu \geq 2$, $\bar\rho$ and $\rho$ belong to the domain of $\mathfrak{L}_{\bL}$.  Lemma~\ref{lem:Lyap}  implies   $\tr{\bL \bar\rho \bL^\dag}=0$, i.e., the support of $\bar\rho$ is contained in the kernel of $\bL$, which coincides with $\mathcal{H}_\alpha$.
Moreover, the semigroup associated to the Lindblad master equation is a contraction for the trace distance: for two trajectories  $\rho_1(t)$ and $\rho_2(t)$,  $t\mapsto \tr{|\rho_1(t)-\rho_2(t)|}$ is a non-increasing function. Thus $t\mapsto \tr{|\rho(t)-\bar\rho|} $ is non-increasing  since $\bar\rho$ is a steady state. Consequently the adherent point $\bar\rho$ is unique: $\rho(t)$ converges towards $\bar\rho$ in $\mathcal{K}^{1}_{\nu}(\mathcal{H})$.
\end{proof}

\section{Reduced slow dynamics of~\eqref{eq:system}}

We have proved in the previous section that the system converges toward the decoherence free  subspace $\mathcal{H}_\alpha$ when we neglect the photon loss channel ($\epsilon =0$). When $0 < \epsilon \ll 1$, the center manifold theorem allows us to separate the system into fast and slow dynamics. The fast dynamics makes the system globally  converge to a subspace close to $\mathcal{H}_\alpha$. The slow dynamics approximate the behavior of ``protected states'' $\q{c^+_\alpha},\q{c^-_\alpha}$. The present section is aimed at  characterizing these dynamics to the first order in $\epsilon$. 

The fast/slow dynamics reduction from nonlinear systems theory, also known as singular perturbation theory, is useful despite the \emph{linearity} of Lindbladian dynamics, because the very high dimension and often high degeneracy of open quantum systems makes matrix diagonalization impractical to apply.

For the sake of clarity, we first particularize the fast/slow dynamics reduction theory to linear systems of finite dimension with the standard notations. We propose then a reduction procedure via an adapted duality viewpoint with the characterization based on~\eqref{eq:der1} and~\eqref{eq:der2} here below. We apply then this characterization to the  quantum system~\eqref{eq:system} and show that it facilitates the computations up to first order with respect to e.g.~\cite{Kessl2012PRA}.


\subsection{Reducing a linear system to its slow dynamics}\label{ssec:lin}

We here review the theory of geometric singular perturbation, which was mainly developed by Fenichel in \cite{Fenichel79} and surveyed by Jones in \cite{Jones1995}, in a linear context.
Consider the nominal linear system $\dotex{x} = A x$ with $x=(x_1,x_2) \in \mathbb{R}^{m}\times\mathbb{R}^{n}$ and converging globally to the $m$-dimensional subspace $\mathcal{S} = \{ x \in \mathbb{R}^{m+n} : x_2 \in \mathbb{R}^n = 0 \}$. 
To this nominal dynamics we add an arbitrary perturbation matrix $B$ of order $\epsilon \ll 1$. In matrix notation, the dynamics can be written in block form which yields:
\begin{eqnarray}\label{eq:OrLinSyst}
\tfrac{d}{dt}x_1 & = & A_1 x_2 + \epsilon (B_1 x_2 + B_0 x_1) \\
\nonumber \tfrac{d}{dt}x_2 & = & A_2 x_2 + \epsilon (B_2 x_2 + B_3 x_1) \, .
\end{eqnarray}
The assumption that $\mathcal{S}$ is globally exponentially stable for $\epsilon=0$ corresponds to $A_2$ having all eigenvalues with strictly negative real parts, thus it is invertible. Hence we can define the regular change of variables
\begin{equation}
\begin{aligned}
&\tilde{x}_1 = x_1 - A_1 A_2^{-1} x_2 \\
&\tilde{x}_2 = x_2
\end{aligned}
\end{equation}
which yields dynamics in Tikhonov normal form
\begin{eqnarray}\label{eq:TikSlow}
\tfrac{d}{dt} \tilde{x}_1 & = & \epsilon \left( (B_0-A_1A_2^{-1} B_3) \tilde{x}_1 \right. \\
\nonumber && \left. + (B_0+B_1 - A_1 A_2^{-1}(B_2+B_3)) \tilde{x}_2 \vphantom{A_1^{-1}} \right) \\
\nonumber & = & \epsilon f(\tilde{x}_1,\tilde{x}_2) \, .\\
\label{eq:TikFast}
\tfrac{d}{dt} \tilde{x}_2 & = & A_2 \tilde{x}_2 + \epsilon (B_3 \tilde{x}_1 + (B_2+B_3A_1A_2^{-1}) \tilde{x}_2)\, \\
\nonumber & = & g(\tilde{x}_1, \tilde{x}_2, \epsilon) \, .
\end{eqnarray}
The Tikhonov conditions for reducing the system by singular perturbations is that the first (slow) subsystem has eigenvalues going down as $\epsilon$, while the second (fast) subsystem has eigenvalues bounded away from zero for $\epsilon = 0$. These conditions are satisfied above. The Tikhonov theorem then allows the following reduction.
\begin{prop}
The trajectories of the full system \eqref{eq:TikSlow},\eqref{eq:TikFast} (with initial conditions satisfying $g= 0$) 
remain $\epsilon$-close over at least a time of order $1/\epsilon$, to the trajectories of the system restricted to the ``slow submanifold'' $g(\tilde{x}_1, \tilde{x}_2,0) = 0$ and where dynamics are given by replacing $\tilde{x}_2$ in $f(\tilde{x}_1,\tilde{x}_2)$ by the solution of $g(\tilde{x}_1, \tilde{x}_2,0) = 0$.
\end{prop}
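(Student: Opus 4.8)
The plan is to observe that \eqref{eq:TikSlow}--\eqref{eq:TikFast} is already in Tikhonov standard form, so that the claim is the linear special case of the geometric singular perturbation theorem of \cite{Fenichel79,Jones1995}; exploiting linearity I would nonetheless give a short self-contained argument. First I would note that for $\epsilon$ small enough the block $A_2+\epsilon(B_2+B_3A_1A_2^{-1})$ occurring in $g$ is still invertible (since $A_2$ is Hurwitz, hence invertible), so that the quasi--steady--state set $\{g(\tilde x_1,\tilde x_2,\epsilon)=0\}$ is the graph $\tilde x_2=\epsilon\,\Pi(\epsilon)\,\tilde x_1$ of the linear map $\Pi(\epsilon)=-\big(A_2+\epsilon(B_2+B_3A_1A_2^{-1})\big)^{-1}B_3$, which depends smoothly on $\epsilon$ and reduces to the slow submanifold $\tilde x_2=0$ at $\epsilon=0$. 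I would then change variables to $z=\tilde x_2-\epsilon\Pi(\epsilon)\tilde x_1$, so that $\{g=0\}$ becomes $\{z=0\}$ and, using that $f$ and $g$ are linear with $g$ vanishing on the graph, the dynamics take the form
\begin{equation*}
\dotex z = \big(A_2+\epsilon P(\epsilon)\big)\,z + \epsilon^2 Q(\epsilon)\,\tilde x_1, \qquad \dotex{\tilde x}_1 = \epsilon\, C(\epsilon)\,\tilde x_1 + \epsilon\, D(\epsilon)\, z,
\end{equation*}
with $P,Q,C,D$ bounded uniformly in $\epsilon$ and $C(0)=B_0-A_1A_2^{-1}B_3$ the reduced slow generator.

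Next I would carry out the error estimate. Since the initial condition satisfies $g=0$ it lies on the graph, hence $z(0)=0$. As $A_2$ is Hurwitz, $A_2+\epsilon P(\epsilon)$ has a uniform exponential decay rate for small $\epsilon$, so Duhamel's formula for the $z$-equation together with the crude bound $\|\tilde x_1(t)\|\le\|\tilde x_1(0)\|\,e^{\epsilon\|C\|_\infty t}$ gives $\|z(t)\|=O(\epsilon^2)$, uniformly on a window $t\le T/\epsilon$. Letting $\bar x_1$ be the reduced trajectory, i.e. the solution of $\dotex{\bar x}_1=\epsilon C(0)\bar x_1$ with $\bar x_1(0)=\tilde x_1(0)$, the error $e=\tilde x_1-\bar x_1$ obeys $\dotex e=\epsilon C(0)e+\epsilon(C(\epsilon)-C(0))\tilde x_1+\epsilon D(\epsilon)z=\epsilon C(0)e+O(\epsilon^2)$ with $e(0)=0$, so Gr\"onwall's lemma gives $\|e(t)\|\le O(\epsilon^2)\,\frac{e^{\|C(0)\|_\infty T}-1}{\epsilon\|C(0)\|_\infty}=O(\epsilon)$ on $t\le T/\epsilon$; finally $\|\tilde x_2(t)\|=\|z(t)+\epsilon\Pi(\epsilon)\tilde x_1(t)\|=O(\epsilon)$, which is the distance to the reduced trajectory since the latter has $\tilde x_2\equiv 0$. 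This establishes $\epsilon$-closeness of the full and reduced trajectories on $[0,T/\epsilon]$ for every fixed $T$.

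The main obstacle is making all constants uniform over the long time window of length $O(1/\epsilon)$: the Gr\"onwall factor there is $e^{\epsilon\|C\|_\infty\cdot T/\epsilon}=e^{\|C\|_\infty T}=O(1)$, precisely because the slow dynamics act at rate $\epsilon$, while the forcing driving the error off the reduced solution is $O(\epsilon^2)$ --- one factor $\epsilon$ because the slow vector field is itself $O(\epsilon)$, one more because the exact slow manifold differs from $\{\tilde x_2=0\}$ only by $O(\epsilon)$ --- so their product integrates to $O(\epsilon)$, while the fast coordinate $z$ is kept $O(\epsilon^2)$ by the Hurwitz property of $A_2$, which prevents any secular drift. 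One should also note what cannot be improved: the horizon $O(1/\epsilon)$ is intrinsic, since the reduced slow dynamics $\dotex{\bar x}_1=\epsilon(B_0-A_1A_2^{-1}B_3)\bar x_1$ need not be stable, so the estimate cannot in general be extended to $t=+\infty$; and a minor bootstrap is needed to control $\sup_{t\le T/\epsilon}\|\tilde x_1(t)\|$, which is immediate here by linearity.
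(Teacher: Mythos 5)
Your proof is correct. Note, however, that the paper does not actually prove this proposition: it simply invokes the Tikhonov/Fenichel geometric singular perturbation theorem (citing \cite{Fenichel79,Jones1995}) after checking that the system \eqref{eq:TikSlow},\eqref{eq:TikFast} is in the required normal form, with the fast block $A_2$ Hurwitz and the slow block of order $\epsilon$. What you supply instead is a self-contained elementary argument that exploits linearity: straightening the exact quasi-steady-state graph $\tilde x_2=\epsilon\Pi(\epsilon)\tilde x_1$ into $z=0$, observing that $g$ vanishes identically on that graph so the residual forcing on $z$ is only the $O(\epsilon^2)$ term coming from the slow drift of $\tilde x_1$, and then closing the estimate with Duhamel for the Hurwitz fast block and Gr\"onwall over the window $t\le T/\epsilon$ (where the Gr\"onwall exponent $\epsilon\|C\|\cdot T/\epsilon$ stays $O(1)$). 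The bookkeeping checks out, including the point that an $O(\epsilon^2)$ forcing integrated over a time $O(1/\epsilon)$ yields the claimed $O(\epsilon)$ deviation, and the bootstrap you mention for $\sup_{t\le T/\epsilon}\|\tilde x_1\|$ is indeed routine for a linear system. The trade-off is the usual one: the paper's route is a one-line appeal to a general nonlinear theorem and transfers directly to the quantum setting of Section IV.B, while your argument is longer but fully explicit, gives the sharper information that the fast coordinate measured from the exact slow manifold is $O(\epsilon^2)$ rather than merely $O(\epsilon)$, and makes visible exactly where the Hurwitz hypothesis on $A_2$ and the restriction to a horizon $O(1/\epsilon)$ enter. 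One small caveat: the proposition's phrase ``initial conditions satisfying $g=0$'' is ambiguous between $g(\cdot,\cdot,\epsilon)=0$ (your reading, giving $z(0)=0$) and $g(\cdot,\cdot,0)=0$ (i.e.\ $\tilde x_2(0)=0$, giving $z(0)=O(\epsilon)$); your estimate survives under either reading since the extra transient $O(\epsilon)e^{-\lambda t}$ in $z$ still contributes only $O(\epsilon^2)$ to the accumulated error in $\tilde x_1$, but it would be worth stating this explicitly.
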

In our linear case, the slow manifold comes down to $\tilde{x}_2 = 0$ and the slow dynamics trivially reduce to the first term in \eqref{eq:TikSlow}. Transforming back to the original coordinates the slow manifold corresponds just to $x_2=0$ and the dynamics are
\begin{equation}\label{eq:LinTheDyns}
\tfrac{d}{dt}x_1 = \epsilon (B_0-A_1A_2^{-1} B_3) x_1 \, .
\end{equation}
The second term reflects the influence of the fast $x_2$ dynamics on the slow variable $x_1$: by blindly setting $x_2=0$ in the original system \eqref{eq:OrLinSyst} and neglecting its second line, we would miss this term and get an incorrect approximation.\\

Computing the corrective term $A_1A_2^{-1} B_3$ by explicit inversion of $A_2$ can be a tedious task when the fast subsystem has a large dimension (in our quantum case, $x_2$ would rigorously be of infinite dimension). However if first integrals of the system with $\epsilon=0$ are known, these can facilitate the computations via the following dual viewpoint.

Consider a linear functional $p^T = (p_1^T , \; p_2^T) \in \mathbb{R}^{*\,m+n}$ which is conserved by $\dot{x} = A x$, i.e.~satisfying $\dot{p} = A^T p = 0$ or equivalently $p_1^T A_1 x_2 + p_2^T A_2 x_2 = 0$
for all $x_2 \in \mathbb{R}^n$. (The notation $\cdot^T$ denotes matrix transpose.) Again using invertibility of $A_2$, we see that $p^T$ actually satisfies
\begin{equation}\label{eq:functional}
p_1^T A_1 A_2^{-1} y + p_2^T y = 0 \; \forall y \in \mathbb{R}^n \, .
\end{equation}
Knowing $m$ linearly independent functionals $(p^T(k))_{k=1,\ldots,m}$ satisfying \eqref{eq:functional} is sufficient to fully characterize the corrective term in \eqref{eq:LinTheDyns}: we have
\begin{equation}\label{eq:der1}
\tfrac{d}{dt}x_1 = \epsilon (B_0+Q) x_1
\end{equation}
with $Q$ defined by the set of linear equations:
\begin{equation}\label{eq:der2}
p_1^T(k) Q = p_2^T(k) B_3 \, ,\; k=1,2,...,m \, .
\end{equation}


\subsection{Quantum system~\eqref{eq:system} with $0<\epsilon\ll\kappa$}\label{ssec:quantred}

We know apply the same procedure to our quantum system. The nominal $\dot{x} = A x$ corresponds to $\dot{\rho} = \mathfrak{L}_{\bL}(\rho)$.
We have shown in Section \ref{sec:conv} that:
\begin{itemize}
\item $\mathcal{S}$ corresponds to a four-dimensional real subspace of Hermitian operators spanned by $\q{c^+_\alpha}\qd{c^+_\alpha}$, $\q{c^-_\alpha}\qd{c^-_\alpha}$, $\q{c^+_\alpha}\qd{c^-_\alpha}+\q{c^-_\alpha}\qd{c^+_\alpha}$, and $i(\q{c^+_\alpha}\qd{c^-_\alpha}-\q{c^-_\alpha}\qd{c^+_\alpha})$. These correspond to the $m=4$ coordinates of $x_1$.
\item The subspace $\mathcal{S}$ is globally asymptotically stable under $\dot{\rho} = \mathfrak{L}_{\bL}(\rho)$. This corresponds to the invertibility condition on $A_2$ independently of $\epsilon$.
\end{itemize}
We therefore introduce the projector
\begin{equation}
\bP_c = \ket{c^+_\alpha}\bra{c^+_\alpha}+\ket{c^-_\alpha}\bra{c^-_\alpha}
\end{equation}
such that
\begin{equation}
\rho_s = \bP_c \rho \bP_c
\end{equation}
corresponds to the ``slow'' $x_1$ space of the previous section. The projection onto ``$x_2$ space'' is given by
$$\rho_f = \rho - \bP_c \rho \bP_c \, .$$
Our goal is to compute the evolution of $\rho_s$, which is the equivalent of \eqref{eq:LinTheDyns}. For this we take advantage of the dual formulation \eqref{eq:der1},\eqref{eq:der2}. The following procedure can in principle be applied to any perturbative dynamics, we here focus on $\mathfrak{L}_{\ba}$ as a physically relevant case.

The perturbative dynamics on the slow manifold features a first component, corresponding to $B_0$, obtained simply by projection onto the slow manifold. The identities $\ba \q{c^\pm} = \alpha \frac{\gamma^{\mp}}{\gamma^\pm} \q{c^\mp}$ quickly yield its explicit expression:
\begin{eqnarray}
\bP_c \mathfrak{L}_{\ba} (\rho_s) \bP_c &\!\! = \!\!& \alpha^2 \mathfrak{L}_{\bX}(\rho_s) \\
\text{where }\bX &\!\! = \!\!& \frac{\gamma_+}{\gamma_-} \ket{c^+}\bra{c^-}+ \frac{\gamma_-}{\gamma_+} \ket{c^-}\bra{c^+} \, .\phantom{KK}
\end{eqnarray}

To compute the corrective term by duality, using the equivalent of \eqref{eq:der1},\eqref{eq:der2}, we need to identify $m=4$ conserved functionals of the system, which are the fixed points of the dual nominal dynamics $\dotex{\xi} = \mathfrak{L}^*_{\bL}(\xi)$. Fortunately, those invariants are known for the particular operator~$\bL$:
\begin{itemize}
\item One easily checks that $ \xi^a =\bI$ the identity operator  is in the kernel of any $ \mathfrak{L}^*_{\bL}$.
\item The parity operator $\xi^b= (-1)^{\ba^\dagger \ba}$ is  in the kernel of $ \mathfrak{L}^*_{\bL}$ because photons are exchanged by pairs.
\item The appendix of \cite{MirrahimiCatComp2014} gives two more operators $\xi^c$ and $\xi^d$ in terms of Bessel functions; one checks that they are linearly independent for finite $\alpha$.
\end{itemize}
We will also use the following key property of the conserved quantities, which is specific to the structure of quantum Lindblad dynamics.
\begin{lemma}\label{lem1}
Any Hermitian operator $\xi$ in $\ker(\mathfrak{L}^*_L)$ commutes with $\bP_c$ the orthogonal projector onto $\mathcal{H}_\alpha$. 
\end{lemma}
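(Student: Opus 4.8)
The plan is to exploit the dissipative structure of the nominal generator: the condition $\mathfrak{L}^*_{\bL}(\xi)=0$ will force $\xi$ to leave $\ker\bL=\mathcal{H}_\alpha$ invariant, after which commutation with $\bP_c$ is a standard consequence of $\xi$ being self-adjoint. Recall that $\mathfrak{L}^*_{\bL}(\xi)=\bL^\dag\xi\bL-\tfrac{1}{2}\big(\bL^\dag\bL\,\xi+\xi\,\bL^\dag\bL\big)$, that $\mathcal{H}_\alpha=\ker\bL$ (the operator $\bL=\ba^2-\alpha^2$ annihilates exactly $\operatorname{span}\{\ket{\alpha},\ket{-\alpha}\}$, as a short Fock-basis recursion shows), and that $\bP_c$ is the \emph{orthogonal} projector onto $\mathcal{H}_\alpha$ because $\ket{c^+_\alpha}$ and $\ket{c^-_\alpha}$ are orthonormal and span $\mathcal{H}_\alpha$.

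First I would take an arbitrary $\ket{\psi}\in\mathcal{H}_\alpha$, so that $\bL\ket{\psi}=0$ and hence also $\bL^\dag\bL\ket{\psi}=0$, and apply the operator identity $\mathfrak{L}^*_{\bL}(\xi)=0$ to $\ket{\psi}$. The first term $\bL^\dag\xi\bL\ket{\psi}$ and the third term $\xi\bL^\dag\bL\ket{\psi}$ both vanish, leaving $\bL^\dag\bL\,\xi\ket{\psi}=0$. Since $\ker(\bL^\dag\bL)=\ker\bL$, this yields $\bL\,\xi\ket{\psi}=0$, i.e. $\xi\ket{\psi}\in\mathcal{H}_\alpha$. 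Thus $\xi\,\mathcal{H}_\alpha\subseteq\mathcal{H}_\alpha$.

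Next I would invoke self-adjointness: for $\ket{\phi}\in\mathcal{H}_\alpha^\perp$ and any $\ket{\psi}\in\mathcal{H}_\alpha$ one has $\braket{\psi}{\xi\phi}=\braket{\xi\psi}{\phi}=0$ because $\xi\ket{\psi}\in\mathcal{H}_\alpha$, so $\xi$ leaves $\mathcal{H}_\alpha^\perp$ invariant as well. Writing $\bI=\bP_c+(\bI-\bP_c)$, invariance of $\operatorname{ran}\bP_c=\mathcal{H}_\alpha$ gives $(\bI-\bP_c)\,\xi\,\bP_c=0$, and invariance of $\operatorname{ran}(\bI-\bP_c)=\mathcal{H}_\alpha^\perp$ gives $\bP_c\,\xi\,(\bI-\bP_c)=0$; adding $\bP_c\xi\bP_c$ to each, $\xi\bP_c=\bP_c\xi\bP_c=\bP_c\xi$, which is the claim. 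No domain subtlety intervenes here: $\mathcal{H}_\alpha$ is two-dimensional, and the conserved operators of interest — $\bI$, $(-1)^{\ba^\dag\ba}$, $\xi^c$, $\xi^d$ — are bounded.

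The conceptual content is really the one-line first step, that the dissipator of a single jump operator $\bL$ forces every conserved observable to preserve $\ker\bL$; the only thing requiring care is bookkeeping, namely verifying once and for all that $\ker\bL$ is precisely the two-dimensional $\mathcal{H}_\alpha$ rather than something larger, and that $\bP_c$ is genuinely the orthogonal projector, so that ``commutes with $\bP_c$'' is the correct reformulation of ``preserves both $\mathcal{H}_\alpha$ and $\mathcal{H}_\alpha^\perp$''.
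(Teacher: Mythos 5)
Your proof is correct, and its opening move is the same as the paper's: evaluating the identity $2\bL^\dag\xi\bL=\bL^\dag\bL\,\xi+\xi\,\bL^\dag\bL$ on $\ker\bL$ to obtain $\bL^\dag\bL\,\xi\,\bP_c=0$. Where you diverge is in how this is converted into commutation. The paper forms the Hermitian operator $\bA=\bP_c\xi+\xi\bP_c$, checks that $\mathfrak{L}_{\bL}(\bA)=0$, and invokes the characterization of $\ker\mathfrak{L}_{\bL}$ (support contained in $\mathcal{H}_\alpha$) to get $[\bP_c,\bA]=0$ and hence $\bP_c\xi=\bP_c\xi\bP_c=\xi\bP_c$. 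You instead use the elementary fact $\ker(\bL^\dag\bL)=\ker\bL$ (from $\langle v,\bL^\dag\bL v\rangle=\|\bL v\|^2$) to conclude $\xi\,\mathcal{H}_\alpha\subseteq\mathcal{H}_\alpha$ directly, and then self-adjointness of $\xi$ to get invariance of $\mathcal{H}_\alpha^\perp$, which together are exactly $[\bP_c,\xi]=0$. Your route is more self-contained and arguably more robust: the paper's assertion that the support of $\bA$ lies in $\mathcal{H}_\alpha$ leans on the kernel characterization of $\mathfrak{L}_{\bL}$, which Section III establishes only for quantum states (positive, trace one), whereas $\bA$ is in general neither positive nor trace-class (e.g.\ for $\xi=\bI$ or the parity operator). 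Your argument sidesteps this entirely, at the modest cost of checking that $\ker\bL$ is precisely the two-dimensional $\mathcal{H}_\alpha$ and that $\ket{c^+_\alpha},\ket{c^-_\alpha}$ are orthonormal, both of which you rightly flag and which follow from short Fock-basis computations. The only shared informality is the unaddressed domain question for the unbounded operator $\bL^\dag\bL$ applied to $\xi\ket{\psi}$, which is no worse in your version than in the paper's.
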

\begin{proof} From $2 \bL^\dag\xi \bL = \bL^\dag \bL \xi + \xi \bL^\dag \bL$ and $\bL \bP_c=0=\bP_c\bL^\dag$ we have 
$\bL^\dag \bL \xi \bP_c=0=\bP_c\xi \bL^\dag \bL$. Thus the Hermitian operator $\bA=\bP_c \xi + \xi\bP_c$ satisfies 
$\bL\bA \bL^\dag = (\bL^\dag \bL \bA + \bA \bL^\dag \bL)/2$, i.e. belongs to  the kernel of $\mathfrak{L}_{\bL}$. The support of $\bA$ is thus included in $\mathcal{H}_\alpha$ and thus $[\bP_c,\bA]=0$. This implies that $\bP_c \xi = \bP_c \xi \bP_c = \xi \bP_c$. 
\end{proof}

Considering projections on respective subspaces, the equivalent of the $B_3$ term of the linear perturbation is given by
$\mathfrak{L}_{\ba}(\rho_s) - \bP_c \mathfrak{L}_{\ba}(\rho_s) \bP_c$. The equivalent of equation~\eqref{eq:der2}  characterizes the Hermitian  operator $\bQ$ with support on $\mathcal{H}_\alpha$ as follows: 
\begin{eqnarray*}
\tr{\bP_c \xi^\nu \bP_c \bQ}
&=& \tr{\left(\xi^\nu - \bP_c \xi^\nu \bP_c\right) \left( \mathfrak{L}_a(\rho_s) - \bP_c \mathfrak{L}_a(\rho_s) \bP_c \right)}
\\&=& \tr{\xi^\nu \left( \mathfrak{L}_a(\rho_s) - \bP_c \mathfrak{L}_a(\rho_s) \bP_c \right)}
\\&=& \tfrac{1}{2} \tr{\xi^\nu \left((\bP_c-I) a^\dagger a \rho_s + \rho_s a^\dagger a (\bP_c-I)\right)}
\\&=&  \tfrac{1}{2} \tr{\xi^\nu \left(a^\dagger a \rho_s (\bP_c-I) + (\bP_c-I) \rho_s a^\dagger a \right)}
\;\; =0 \, 
\end{eqnarray*}
for any $\nu=a,b,c,d$.
From the first to the second line, $\bP_c \xi^\nu \bP_c$ is readily dropped since $\bP_c$ is a projector. For the next one we have to write out $\mathfrak{L}_{\ba}$ and see that $\bP_c \ba \rho_s \ba^\dagger \bP_c = \ba \rho_s \ba^\dagger$ for the particular perturbation operator $\ba$. The last line follows by using Lemma \ref{lem1} and $\bP_c \rho_s = \rho_s$.
As a conclusion, we get that the corrective term is $\bQ=0$ for our particular case. We can summarize these computations as follows.\vspace{2mm}  

\emph{The trajectories of system \eqref{eq:system} with initial conditions  having supports in $\mathcal{H}_\alpha$  remain $\epsilon$-close over at least a time of order $1/\epsilon$, to the trajectories of the ``slow variable'' $\rho_s$ which is a linear combination of $\q{c^+_\alpha}\qd{c^+_\alpha}$, $\q{c^-_\alpha}\qd{c^-_\alpha}$, $\q{c^+_\alpha}\qd{c^-_\alpha}+\q{c^-_\alpha}\qd{c^+_\alpha}$, and $i(\q{c^+_\alpha}\qd{c^-_\alpha}-\q{c^-_\alpha}\qd{c^+_\alpha})$. The slow state $\rho_s$ follows the Lindblad dynamics:
\begin{eqnarray}
\tfrac{d}{dt} \rho_s &\!\! = \!\!& \epsilon \alpha^2 \mathfrak{L}_{\bX}(\rho_s) \label{eq:slow_sys} \\
\text{where } \bX &\!\! = \!\!& \frac{\gamma_+}{\gamma_-} \ket{c^+}\bra{c^-}+ \frac{\gamma_-}{\gamma_+} \ket{c^-}\bra{c^+} \, .\phantom{KK} \label{eq:slow_sys_def}
\end{eqnarray}}\vspace{2mm}

{In applications considering $\q{c^+_\alpha}$ and $\q{c^-_\alpha}$ as canonical states $\q{0},\q{1}$ of a logical qubit \cite{MirrahimiCatComp2014}, the operator $X$ corresponds to a bit-flip in the limit $\frac{\gamma_+}{\gamma_-} \rightarrow 1$ of large coherent amplitude $\alpha$ and to a decoherence to the vacuum $\q{0}$ in the limit of Fock states $\q{c^+_\alpha}=\q{n=0}$, $\q{c^-_\alpha}=\q{n=1}$ when $\alpha=0$. For all other cases, the qubit dynamics \eqref{eq:slow_sys},\eqref{eq:slow_sys_def} corresponds on the canonical Bloch sphere to:
\begin{eqnarray*}
\tfrac{d}{dt} x & = & -\alpha^2 \; \frac{(\gamma_+^2-\gamma_-^2)^2}{2\gamma_+^2\gamma_-^2} \; x    \\
\tfrac{d}{dt} y & = & -\alpha^2 \; \frac{(\gamma_+^2+\gamma_-^2)^2}{2\gamma_+^2\gamma_-^2} \; y   \\
\tfrac{d}{dt} z & = & -\alpha^2 \; \frac{\gamma_+^4 + \gamma_-^4}{\gamma_+^2\gamma_-^2} \;  \left(z-\tfrac{\gamma_+^4-\gamma_-^4}{\gamma_+^4+\gamma_-^4}\right) \; .
\end{eqnarray*}
This converges to $x=0$ (slowly for $\frac{\gamma_+}{\gamma_-} \simeq 1$), $y=0$, and $z=\tfrac{\gamma_+^4-\gamma_-^4}{\gamma_+^4+\gamma_-^4}$ (which is $\simeq 0$ for $\frac{\gamma_+}{\gamma_-} \simeq 1$).}

\section{Numerical simulations}

To illustrate the interest of this model reduction based on adiabatic elimination of the rapidly converging variables, we compare  via numerical simulations trajectories of the complete system and of the reduced one. To do so, we use a numerical scheme which preserves the positiveness for the Lindblad equation and similar to one used in~\cite{LeR2013PRA}.  We choose the following values of the parameters : a time-step of $10^{-3}$, the decoherence strength $\kappa=1$,  $u=1/2$ and $\epsilon=0.01$. With  $\alpha=1$,  the population of photons for $n  > n_{max}=40$ in the coherent state $\ket{\alpha}$ is almost zero since less that $\frac{1}{n_{max}!}$. Consequently, we truncate the infinite-dimensional Hilbert space to a numerical system space spanned by $\{ \ket{1},\ket{2}, \dots \ket{40}\}$ in the Fock basis. We denote by $\rho$ the density matrix of the resulting system. It is worth stressing that the complete system is represented by an $n_{max}\times n_{max}$ matrix while the reduced system is represented by a $2 \times 2$ matrix (on the basis $\ket{c^+_\alpha},\ket{c^-_\alpha})$. Thus the computation is much faster on the second one.

We first take as initial condition the vacuum state, $\rho_0=\ket{0}\bra{0}$. The state of the reduced system, $\rho_s$ is then initialized at $\ket{c^+_\alpha}\bra{c^+_\alpha}$ because both $\rho_0$ and $\ket{c^+_\alpha}\bra{c^+_\alpha}$ are $+1$ eigenstates of the parity operator $\xi^b= (-1)^{\ba^\dagger \ba}$, which is a conserved quantity (see Section \ref{ssec:quantred} and~\cite{MirrahimiCatComp2014}). To compare the trajectories of~\eqref{eq:system} initialized at $\ket{0}\bra{0}$ and of~\eqref{eq:slow_sys} initialized at $\ket{c^+_\alpha}\bra{c^+_\alpha}$, we show in figure~\ref{fig:comp} the expectation values $\tr{\rho \sigma_z}$ and $\tr{\rho_s \sigma_z}$ of the operator $\sigma_z=\ket{c^+_\alpha}\bra{c^+_\alpha}-\ket{c^-_\alpha}\bra{c^-_\alpha}$, commonly denoted $\langle \sigma_z \rangle$.
After a transitional regime of typical duration $1/\kappa$, one can see a strong similarity between $\tr{\rho \sigma_z}$ and $\tr{\rho_s \sigma_z}$ up to a constant offset. The value of this offset is of order $\epsilon$.
Furthermore, we plot the fidelity $F(\rho,\rho_s)=tr \left( \sqrt{\sqrt{\rho_s}\rho \sqrt{\rho_s}} \right)$ between $\rho_s$ and  $\rho$.
For better readability, figure \ref{fig:fidelity} shows the logarithm of 1 minus the fidelity, i.e.~of its deviation from the ideal value 1. This deviation quickly converges to an order $10^{-4}$, corresponding to $\epsilon^2$ as expected. It then further decreases, incidentally, as both systems converge towards the unique equilibrium of the slow dynamics.

To emphasize the influence of $\gamma^+$ and $\gamma^-$ in \eqref{eq:slow_sys_def}, we add a simulation with the same parameters but with the following and same initial condition for the complete and reduced system:
\begin{equation*}
\tilde{\rho}_0= \frac{1}{2}\left(\ket{c^+_\alpha}+\ket{c^-_\alpha}\right)\left(\bra{c^+_\alpha}+\bra{c^-_\alpha}\right)
\end{equation*}
Figure \ref{fig:comp2_x} shows that the expectation value of $\sigma_x= \ket{c^+_\alpha}\bra{c^-_\alpha}+\ket{c^-_\alpha}\bra{c^+_\alpha}$ slowly decreases over time, as expected from bit-flip dynamics. The slope of this decrease is approximated to $\sim 4\%$ accuracy by the reduced dynamics. Moreover, figure \ref{fig:comp2} establishes that $\langle \sigma_z \rangle$ does not remain zero. This is due to the fact that, with $\gamma^+ > \gamma^-$, equation \eqref{eq:slow_sys} ``promotes'' the population of $\ket{c^+_\alpha}\bra{c^+_\alpha}$ over the population of $\ket{c^-_\alpha}\bra{c^-_\alpha}$, unlike a pure bit-flip.

The simulations thus confirm the validity of our approximation of the complete model by the reduced one.

\begin{figure}
\begin{center}
\includegraphics[scale=0.55]{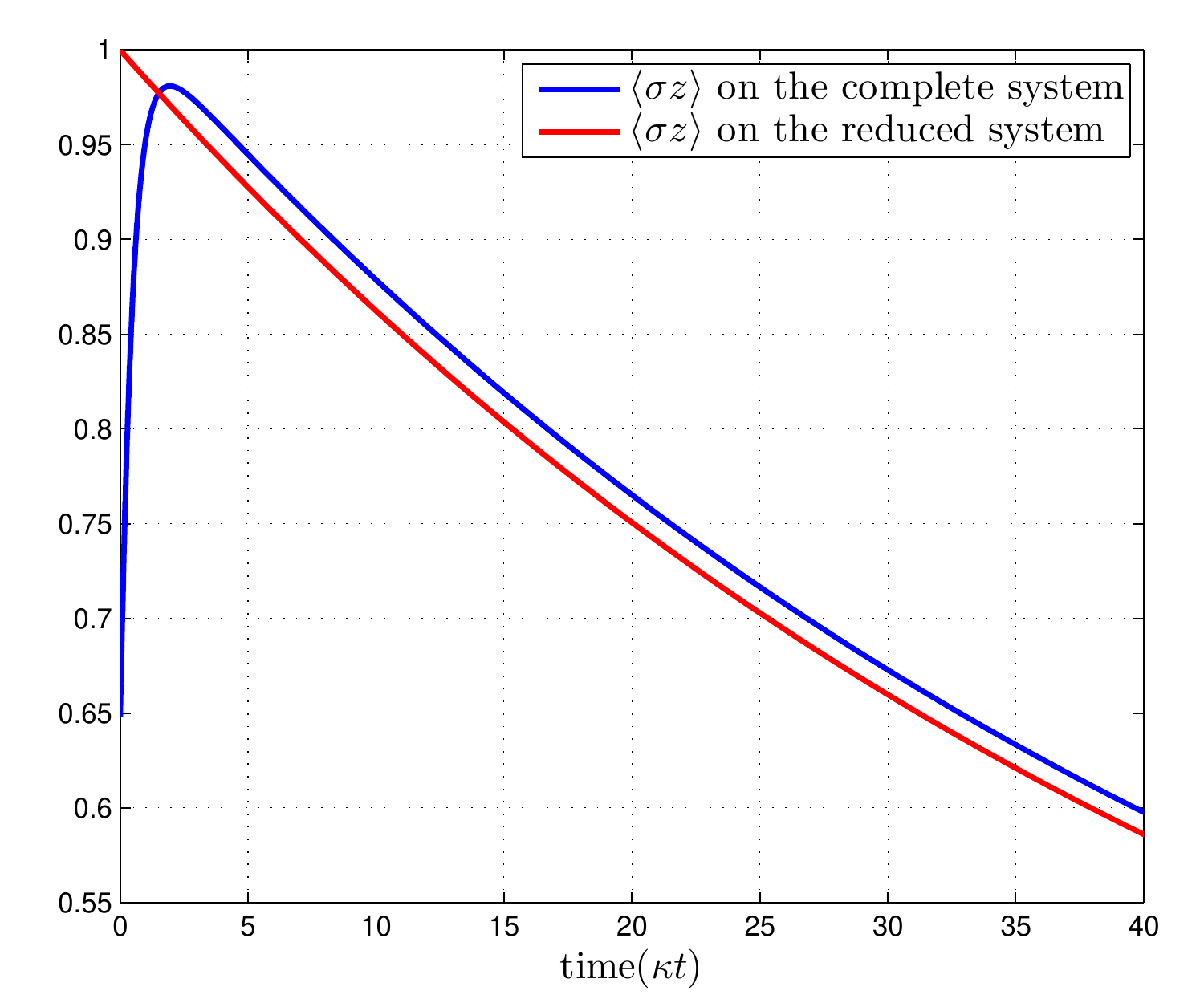}
\end{center}
\caption{Comparison of $\tr{\sigma_z \rho}$ and $\tr{\sigma_z \rho_s}$  for $\rho$ solution of  the complete system~\eqref{eq:system} with $\epsilon=\tfrac{1}{100}$, $\alpha=1$ and vacuum  initial condition (truncation up to $40$ photons),  and for $\rho_s$ solution of  the reduced system~\eqref{eq:slow_sys} with initial condition $\ket{c^+_\alpha}\bra{c^+_\alpha}$.}\label{fig:comp}
\end{figure}

\begin{figure}
\begin{center}
\includegraphics[scale=0.55]{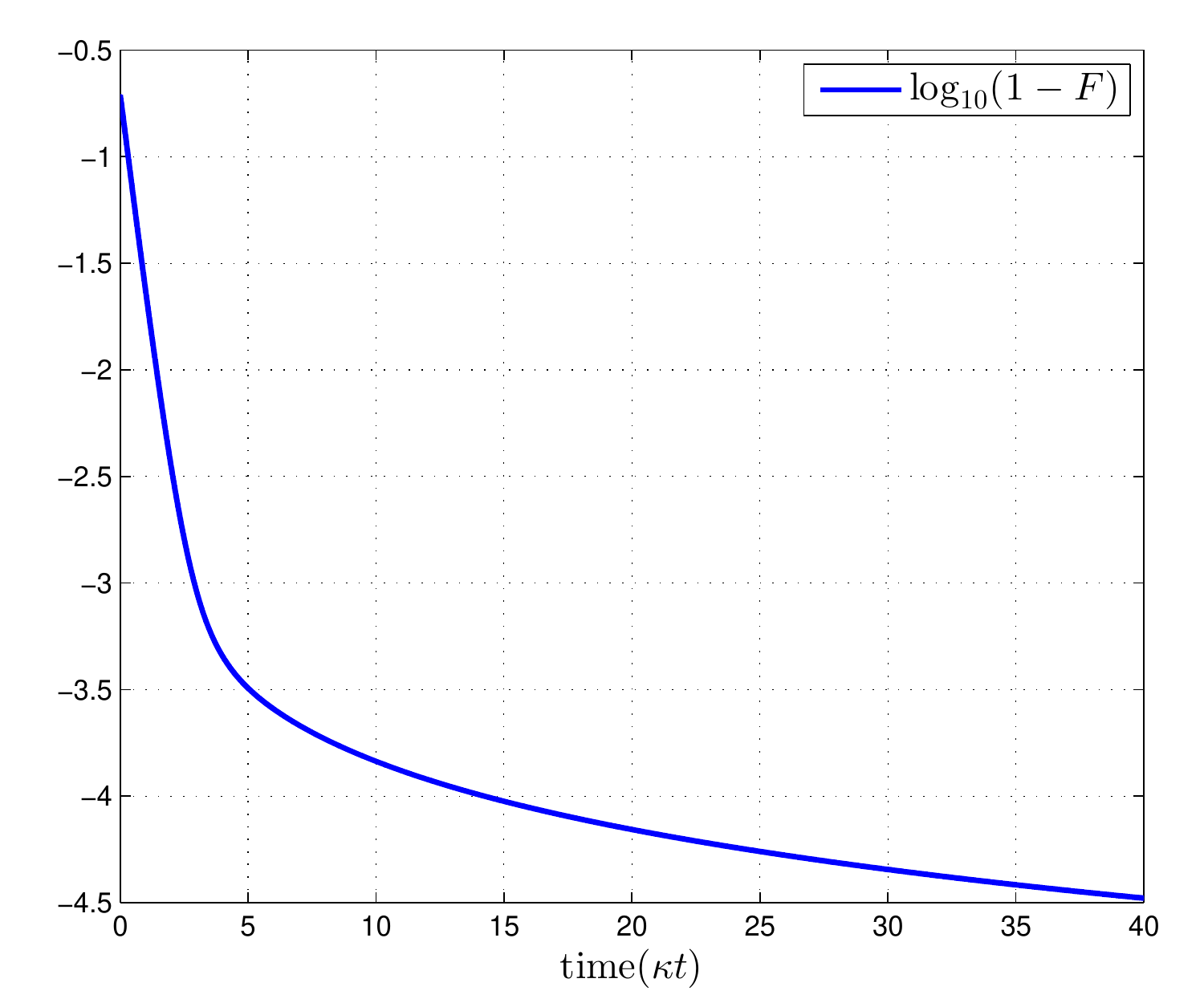}
\end{center}
\caption{$\log_{10}(1-F)$ where $F$ is the fidelity between $\rho_s$ and $\rho$  for the simulations of figure~\ref{fig:comp}.}
\label{fig:fidelity}
\end{figure}

\begin{figure}
\begin{center}
\includegraphics[scale=0.55]{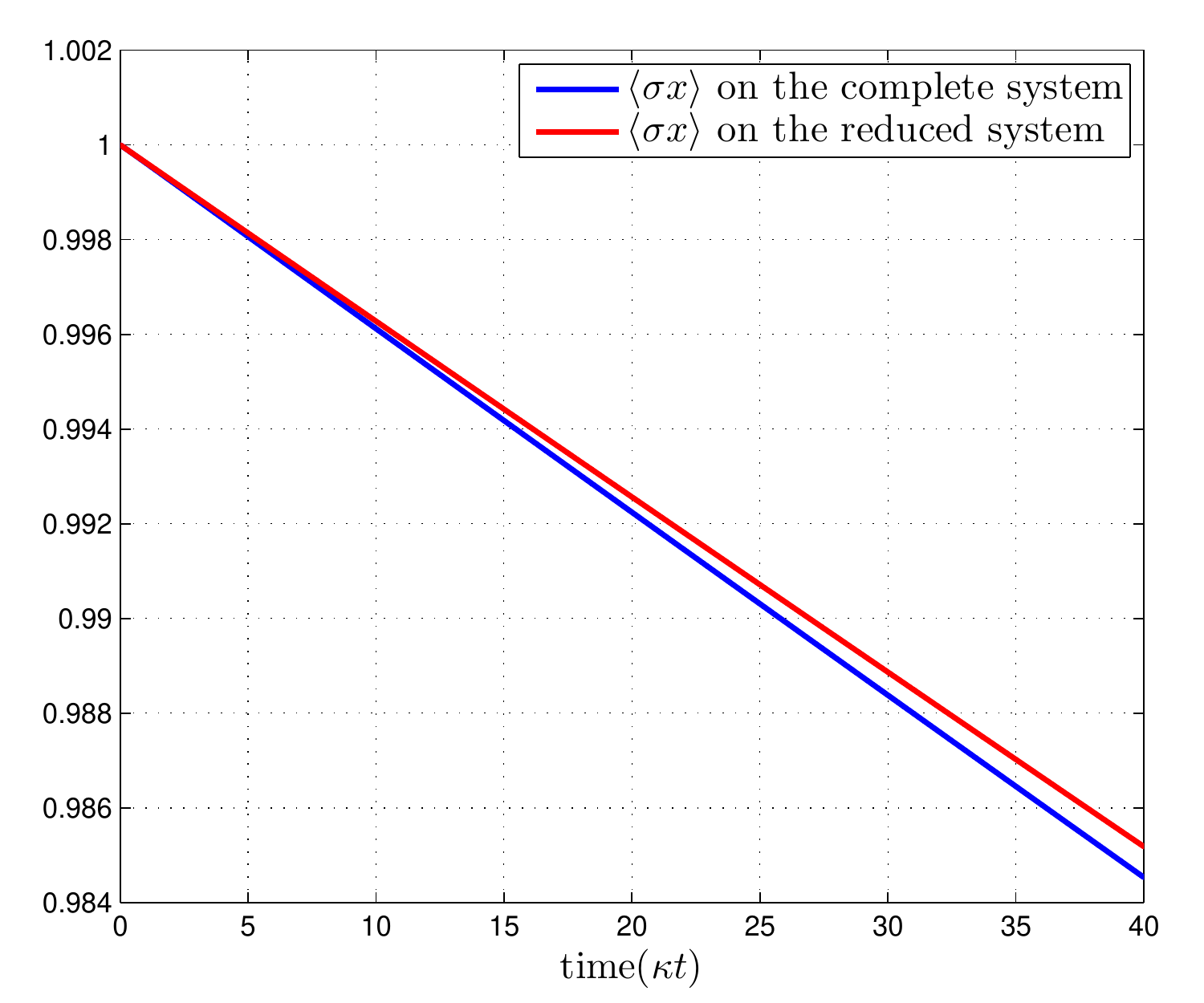}
\end{center}
\caption{Comparison of $\tr{\sigma_x \rho}$ and $\tr{\sigma_x \rho_s}$  for $\rho$ solution of  the complete system~\eqref{eq:system} with $\epsilon=\tfrac{1}{100}$, $\alpha=1$  (truncation up to $40$ photons), and for $\rho_s$ solution of  the reduced system~\eqref{eq:slow_sys}, with the same  initial condition $\rho(0)=\rho_s(0)=\frac{1}{2}\left(\ket{c^+_\alpha}+\ket{c^-_\alpha}\right)\left(\bra{c^+_\alpha}+\bra{c^-_\alpha}\right)$.}\label{fig:comp2_x}
\end{figure}

\begin{figure}
\begin{center}
\includegraphics[scale=0.55]{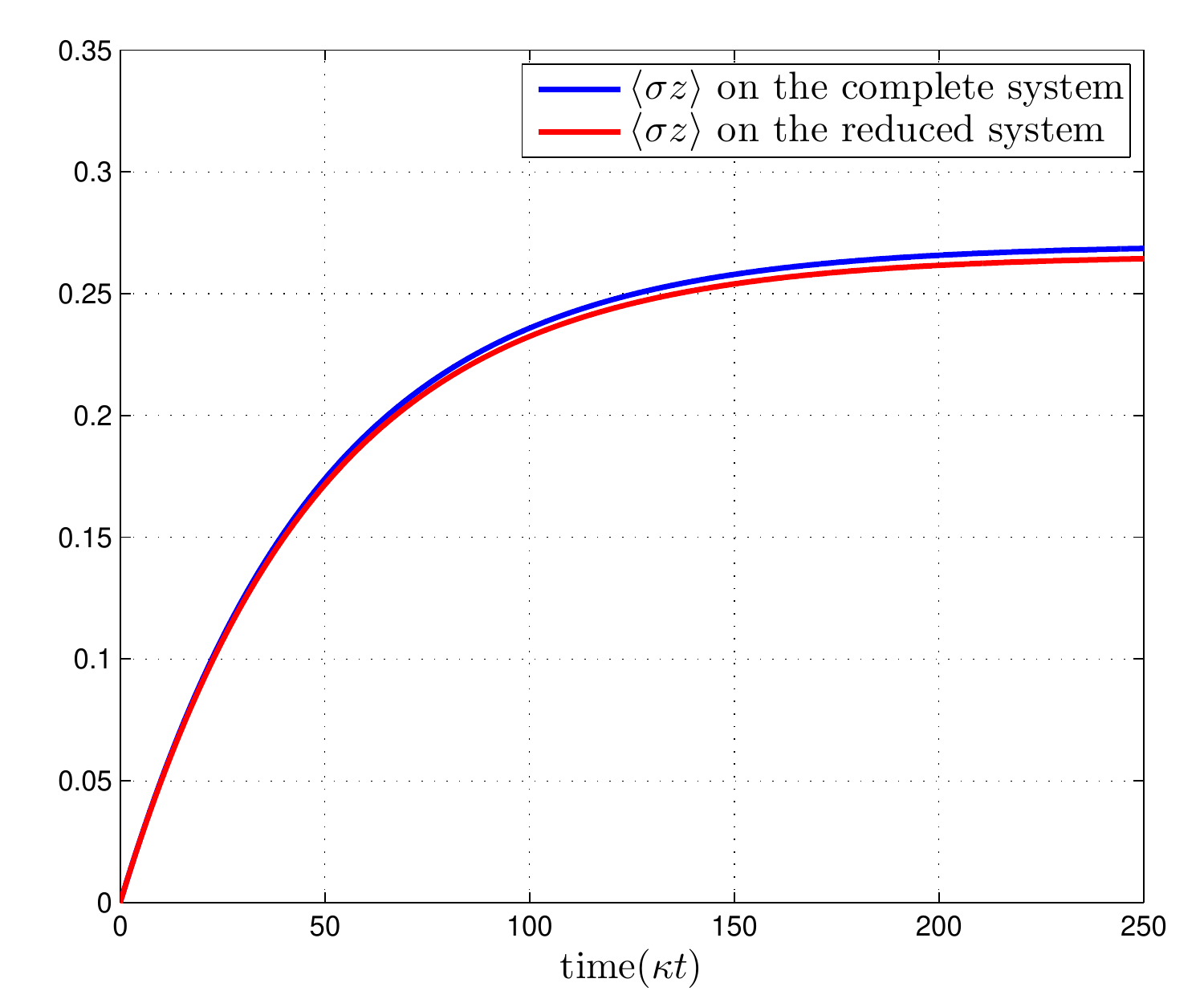}
\end{center}
\caption{Comparison of $\tr{\sigma_z \rho}$ and $\tr{\sigma_z \rho_s}$ where $\rho$ and $\rho_s$ correspond to simulations of figure~\ref{fig:comp2_x}.}\label{fig:comp2}
\end{figure}

\section{CONCLUSIONS} 

We have rigorously proved convergence of a harmonic oscillator Lindblad dynamics with two-photon exchanges, to a protected subspace. We have also established the approximate slow dynamics on this protected subspace when a typical perturbation is added, and illustrated its validity in simulations. The methods used for this particular example are applicable to general Lindbladian dynamics.

The reduction by singular perturbations and adiabatic elimination is of course applicable in general to evaluate the remaining slow dynamics in quantum systems with (engineered) protected subspaces. Extension to $k$-photon processes $\ba^k$ with $k>2$ can be addressed in the same way. The fact that the slow variable still follows a Lindbladian master equation may not be surprising but remains to be proved in the general case. The fact that the dynamics reduces to the orthogonal projection of the Lindbladian onto the protected subspace (i.e.~$B_0$ without any correction due to $B_3$, in the terms of Section \ref{ssec:lin}) for the case examined here would be in agreement with the physicists' ``quantum Zeno'' viewpoint. However, under which formulation this viewpoint should be applied in the general case also remains to be rigorously characterized.

\addtolength{\textheight}{-12cm}   

\section*{ACKNOWLEDGMENT}
The authors thank Mazyar Mirrahimi for many useful discussions.

\bibliographystyle{plain}

\end{document}